\documentclass[submission,copyright,creativecommons]{eptcs}
\providecommand{\event}{NCL2026} 

\usepackage{iftex}
\usepackage{graphicx}
\usepackage{setspace}
\usepackage{amsmath,amssymb}
\usepackage{proof, amssymb, my-macro2}
\usepackage{amsthm}
\usepackage{comment}
\usepackage{upgreek}

\ifpdf
  \usepackage{underscore}         
  \usepackage[T1]{fontenc}        
\else
  \usepackage{breakurl}           
\fi

\newtheorem{lemma}{Lemma}
\newtheorem{theorem}{Theorem}

\title{A Basic Modal Logic for the Abstract Treatment of Quantum Mixed States}
\author{Tomoaki Kawano
\institute{Faculty of Informatics\\
Kanagawa University\\
Yokohama, Japan}
\email{tomoakikawano.tk@gmail.com}
}
\def\titlerunning{A Basic Modal Logic for the Abstract Treatment of Quantum Mixed States}
\def\authorrunning{T. Kawano}
\begin{document}
\maketitle

\begin{abstract}
Quantum logic has developed in various directions since its inception. However, many existing logical approaches either focus only on pure states, or use Hilbert-space models in which the abstract relationship between pure states, mixed states, and probabilities is not made explicit. In this work, we introduce a model designed to handle quantum mixed states abstractly. Furthermore, we construct a proof system for the proposed logic and establish its completeness with respect to the intended semantics.
\end{abstract}

\section{Introduction}

Quantum logic provides a formal framework for reasoning about propositions arising in quantum mechanics and quantum information theory. In standard Hilbert-space semantics, propositions are represented by closed subspaces or projection operators, and logical operations are interpreted by lattice-theoretic constructions \cite{birk,HANDQ,HANDQS}.

A central feature of quantum theory is its probabilistic structure. In particular, one should distinguish two sources of probability. The first is the probability internal to a pure state, determined by the Born rule. The second is the classical probabilistic mixing of pure states, represented by mixed states. This distinction is fundamental in open quantum systems, statistical ensembles, and quantum information processing.
Various logical systems for probability \cite{FaginHalpernMegiddo,HANDMANY2,Halpern,Nilsson,HANDMANY} and quantum probability \cite{batprob,intq1,MeydenPatra,PB2018,PB2023,extendQL} have been studied extensively. However, many existing approaches either focus only on pure states, treat probabilities without explicitly separating these two levels, or use Hilbert-space models in which the abstract relationship between pure states, mixed states, and probabilities is not made explicit. While concrete treatments based on Hilbert spaces are important, abstract models are also useful for comparing quantum-probabilistic reasoning with other logics of probability and for conducting independent proof-theoretic analyses.

The aim of this study is to develop a finite probabilistic Kripke-style modal abstraction, together with a proof system for it. The finiteness of the framework should be understood as a restriction on the expressive resolution of the object language, not as a claim that quantum probabilities themselves are discrete. We fix a finite set of probability thresholds and reason only about distinctions observable at that level of precision.
Modal and Kripke-style approaches to quantum logic abstract away from the full Hilbert-space formalism while preserving some essential quantum features such as non-classical entailment \cite{Chihilbert,HANDQ,TK1,nis1,nis2}. However, abstract treatments of probabilistic degrees in such Kripke-style quantum-logical settings have been less extensively developed. Standard dynamic quantum logic, for example, represents states as possible worlds and treats quantum state changes, but its basic semantic framework is formulated in terms of certainty rather than probability \cite{bat1,bat15} or contains only the notion of probability for pure states \cite{TKQPL}.

In the proposed models, possible worlds represent pure states, weighted accessibility relations abstract transition probabilities between pure states, and probability distributions over possible worlds represent mixed states. We introduce a bimodal probabilistic language with two modal operators: one for transition probabilities at the level of pure states and one for probabilistic validity with respect to mixed states. Thus, the semantics separates probabilities determined at the level of pure-state transitions from probabilities arising from classical mixtures of pure states. 
The present framework preserves only a limited part of the quantum structure relevant to probability. It does not attempt to reproduce the full Hilbert-space structure: propositions are not required to form an orthomodular lattice, and notions such as phase, unitary dynamics, sequential measurements, and entanglement are not represented in the present language. Thus, this study should be understood as an abstraction of the two-level probabilistic structure of quantum states rather than as a complete logic of Hilbert-space quantum mechanics.


This semantic separation also guides the proof theory. Since the pure-state component is relational, it is naturally handled by nested sequents \cite{Poggiolesi2023,Poggiolesi2009}. By contrast, the mixed-state component concerns distributions over individual states and is naturally handled by hypersequents \cite{AvronHypersequent,Min68}. We therefore develop a sequent calculus {\bf MSQLM} that combines nested sequents and hypersequents in accordance with the two-level structure of the semantics.

The main contributions of this study are as follows. First, we introduce quantum probabilistic models, finite Kripke-style models in which pure states are represented as worlds and mixed states as probability distributions over worlds. Second, we define a bimodal probabilistic language interpreted over these models. Third, we develop the sequent calculus {\bf MSQLM}. Finally, we prove soundness, completeness, cut elimination, and decidability for the proposed system.

Section 2 introduces the semantic framework and the bimodal language. Section 3 presents the sequent calculus and establishes the main metatheoretic results. Section 4 concludes the study. 

\section{Modal logic for quantum mixed states}
In this section, we develop a finite Kripke-style model and logic for representing two levels of probability inspired by quantum pure and mixed states.
We shall refer to the new logic as QLM (quantum logic with mixed state).
Intuitively, this logic combines the logic for probability in pure states discussed in \cite{TKMB,extendQL} with the logic for classical probability, which is based on a model that assigns probabilities to each possible world in a Kripke model.

We now give concrete examples of probabilities associated with pure and mixed states in quantum mechanics.
A pure state is represented by a unit vector in a Hilbert space, and measurement probabilities are determined by the Born rule.
If a proposition is represented by a closed subspace $S$, then the probability that it holds in a pure state ${\uppsi}$ is given by $\| P_S \uppsi \|^2$, where $P_S$ is the projection onto $S$.
For example, in a two-dimensional complex Hilbert space, the states
$|0\rangle$ and $(1/\sqrt{2}) (|0\rangle + |1\rangle)$
(or their density-matrix representations
$|0\rangle\langle 0| $ and $ (1/\sqrt{2}) (|0\rangle + |1\rangle) (1/\sqrt{2})(\langle 0| + \langle 1|)$) are pure states.
When a measurement in the basis is performed, the former state yields 
$0$ with certainty, whereas in the latter state, the probabilities of obtaining 0 or 1 are each one-half : $ | \langle 0| (1/\sqrt{2}) (|0\rangle + |1\rangle) |^2 = | \langle 1|  (1/\sqrt{2})(|0\rangle + |1\rangle) |^2 = (1/\sqrt{2})^2 = 1/2$.
A mixed state is defined as a classical probabilistic mixture of such pure states.
For example, a probabilistic mixture of these two states with probabilities $1/3$ and $2/3$, respectively, is expressed as the density matrix $ \rho = (1/3)|0\rangle\langle 0| + (2/3)  (1/\sqrt{2}) (|0\rangle + |1\rangle) (1/\sqrt{2})(\langle 0| + \langle 1|)$.
In this example, the probability of obtaining 0 is $(1/3)\times 1 + (2/3) \times (1/\sqrt{2})^2 = 2/3$.
As explained below, when $p$ is understood as the proposition ``the outcome 0 is obtained,'' the formula $\Diamond_1 p$ holds at $|0\rangle$, $\Diamond_{1/2} p$ holds at $(1/\sqrt{2}) (|0\rangle + |1\rangle)$, and ${\sf M_{2/3}} p$ holds at $\rho$.
This example is meant only to motivate the two-level structure of QLM: the probabilities inside each pure state are represented by weighted accessibility, while the outer classical mixture is represented by a probability distribution over worlds.


Propositional variables are denoted by $p,q,r,\ldots$, {\it formulas for pure states} by 
$A, B,\ldots$, and {\it formulas for mixed states} by 
$\phi, \psi,\ldots$.
In this study, we adopt a discretization method that divides the unit interval of probability values into finitely many subintervals.
Let $N_I$ and $N_J$ be positive integers.
Then $I$, $J$, and $I_J$ are defined as follows.
\begin{description}
\item $I=\left\{\frac{k}{N_I}\mid k=0,1,\ldots,N_I\right\},$
\item $J=\left\{\frac{k}{N_J}\mid k=0,1,\ldots,N_J\right\},$
\item $I_J = \left\{\frac{k}{N_I N_J}\mid k=0,1,\ldots,N_I N_J\right\},$  
\end{description}
For example, if $N_I = 10$ and $N_J = 2$, then  
\begin{description}
\item $I=\{0, 0.1, 0.2, \ldots,1\}$,
\item $J=\{0, 0.5, 1\}$,
\item $I_J=\{0, 0.05, 0.1, 0.15, \ldots,1\}$. 
\end{description}
The set $J$ is used for transition weights at the pure-state level, while $I$ is used for probability weights in mixed-state distributions. Since mixed-state probabilities combine these two kinds of weights, $I_J$ is used as the threshold set for mixed-state formulas. 

Our use of finite probability thresholds follows two related traditions. Suppes-style probability logics use a fixed error parameter $\epsilon$ and reason with assertions of the form ``probability at least $1-n \epsilon$'' \cite{Boricic1}. Finitely graded modal approaches to extended quantum logic similarly restrict possible degrees to a finite threshold set in order to obtain a tractable proof system \cite{TKMB,TKQPL,extendQL}. Thus, the restriction to finite thresholds concerns the expressive resolution of the object language, not the underlying notion of probability itself. Probabilities lying between two adjacent thresholds in $I_J$ are not distinguished by the language. Accordingly, QLM is a finite-threshold modal abstraction of mixed-state reasoning, rather than a full real-valued logic of Hilbert-space quantum mechanics.
Allowing arbitrary real-valued thresholds would not present a conceptual difficulty for the semantics itself. However, the present proof system in this study relies essentially on the finiteness of $I$ and $J$: only finitely many modal formulas and probability configurations need to be considered during proof search. With continuously many thresholds, the present finitary completeness and decidability arguments would therefore no longer apply directly.


We define the formulas of QLM as follows:

\begin{description}
\item $A :: =  p \ | \ \neg A \ |\  A \wedge A \ | \ \square_\gamma A $
\item $\phi :: =  A \ | \ \neg \phi \ |\  \phi \wedge \phi \ | \ {\sf M_\alpha} A$         
\end{description}

\noi
where $\gamma \in J \setminus \{0\}$, $\alpha \in I_J \setminus \{0\}$. 
(The formulas for pure states are based on \cite{TKMB,extendQL}; however, for simplicity, we use only the main ones in this study.)
For the sake of readability, we sometimes write $(\gamma,i)$ instead of $\gamma_i$.
Let $m$ be the minimal element of $J \setminus \{0\}$ (or, depending on the context, 
the minimal element of $I_J \setminus \{0\}$).
We use the following abbreviations: $\Diamond_\gamma A = \neg \square_{\gamma} \neg A$, and $\sim A = \square_m \neg A$. 

We define a {\it quantum probabilistic model} (QPM) $\langle W, R, V, P \rangle $ as follows:

\begin{description}
\item $W$ is a set (possible worlds),
\item $R$ is a function from $W \times W$ to $J$ such that, $R(x, y) = 1$ iff $x = y$ (reflexivity), and $R(x, y) = R(y, x)$ (symmetry),
\item $V$ is a function from propositional variables to the power set of $W$,
\item $P$ is a function from $W$ to $I$ such that $\sum_{x \in W} P(x) = 1$.

\end{description}

Intuitively, the elements of $W$ represent pure states (rays, or unit vectors modulo global phase), and 
$R$ represents the squared absolute value of the inner product between states.
That is, $R$ is related to the probability in quantum pure states, whereas $P$ corresponds to classical probabilistic mixtures.

We define the truth of formulas at $x \in W$ as follows:

\begin{description}
\item $x \models p\ \Leftrightarrow x \in V(p)$,

\item $x \models \neg A \ \Leftrightarrow x \not\models A$,

\item $x \models A \wedge B \ \Leftrightarrow x \models A$ and $x \models B$,

\item $x \models \square_\gamma A \ \Leftrightarrow $
$\forall y \in W$,  $\gamma \leqq R (x, y) \to y \models A $,

\item $x \models {\sf M_\alpha} A \ \Leftrightarrow$ $ \alpha \leqq \sum_{y \in W} (P(y)*\max ( \{ 0 \} \cup \{\gamma \in J \setminus \{0\} |\ y \models \Diamond_\gamma A \}))$.
\end{description}


Note that no rounding is required. If $P(y) = k_y/N_I$ and $q_A(y) = \max ( \{ 0 \} \cup \{\gamma \in J \setminus \{0\} |\ y \models \Diamond_\gamma A \})) = l_y/N_J$, then $\sum_y P(y) q_{A}(y) = \sum_y k_y l_y / N_I N_J \in I_j$. 

The normalized projection of $x$ onto the subspace in which 
$A$ is true can be understood as the unit vector in that subspace that is closest to 
$x$.
From the definition of $x \models \square_\gamma A$, we can confirm that 
$\Diamond_\gamma A$ expresses the existence of at least one 
$A$-true state closer than 
$\gamma$, which in turn means that, in this pure state, $A$ is obtained with probability at least $\gamma$.
(Since $\gamma$ is related to the inner product, ``closer'' here means that 
$R (x, y)$ is larger.)
Therefore, ${\sf M_\alpha} A$ means that, in the present mixed state, 
$A$ is obtained with probability at least $\alpha$.
$\square_m \neg A$ states that it is orthogonal (i.e., the inner product is zero) to any state in which 
$A$ holds.
Therefore, $\sim$ denotes {\it quantum negation} and expresses that the probability of obtaining $A$ being true is zero.

We disallow nested {\sf M}-modalities, since iterating probability claims about the same current mixed state has little clear meaning and would only complicate the system.
Intuitively, this is analogous to the expression of probability in \cite{Boricic1}, where probability enters only through sequents, 
while the formulas inside the sequents remain classical propositional formulas.
The inclusion of $A$ in the grammar of mixed-state formulas is only a syntactic embedding of pure-state formulas into the mixed-state language. The genuinely mixed-state probabilistic operator is ${\sf M}_\alpha$, which applies only to pure-state formulas.
The definition is separated into two steps only for this reason, and the symbols $\neg$ and $\wedge$ appear in both grammars, but they denote the same Boolean connectives.
We also omit the modal operator corresponding to ``at least $0$'' from the definition, since it is of little probabilistic significance and would unnecessarily complicate the discussion as a global operator.




\ 

\noi
Example 1 of QPM:

\begin{description}
\item $W = \{x, y, z, w\}$, $R(x,z) = R(x,w) = R(y,z) = R(y,w) = 0.5$, $R(x,y) = R(z,w) = 0$, $V(p) = \{x\}, \\ V(q) = \{ y\}, V(r) = \{z\}, V(s) = \{ w\}$, $P(x) = P(y) = 0.5, P(w) = P(z) = 0$.
\end{description}

\noi
Example 2 of QPM:

\begin{description}
\item $W = \{x, y, z, w\}$, $R(x,z) = R(x,w) = R(y,z) = R(y,w) = 0.5$, $R(x,y) = R(z,w) = 0$, $V(p) = \{x\}, \\ V(q) = \{ y\}, V(r) = \{z\}, V(s) = \{ w\}$, $P(x) = P(y) = 0, P(w) = P(z) = 0.5$.
\end{description}

\noi
In both models, $W$ (and $p,q,r,s$) represent the four fundamental quantum bit states $| 0 \rangle$,
$| 1 \rangle$, $(1/\sqrt{2})(| 0 \rangle + | 1 \rangle)$, and $(1/\sqrt{2})(| 0 \rangle - | 1 \rangle)$. 
Example 1 represents an equal probabilistic mixture of 
$| 0 \rangle$ and $| 1 \rangle$, while Example 2 represents an equal probabilistic mixture of $(1/\sqrt{2})(| 0 \rangle + | 1 \rangle)$ and $(1/\sqrt{2})(| 0 \rangle - | 1 \rangle)$.
As is well known, these two are identical as mixed states:
$(1/2)(|0 \rangle \langle 0|)+(1/2)(|1 \rangle \langle 1|) = 
(1/2)(1/\sqrt{2})(|0 \rangle + |1 \rangle)(1/\sqrt{2})(\langle 0| + \langle 1|) +(1/2)(1/\sqrt{2})(|0 \rangle - |1 \rangle)(1/\sqrt{2})(\langle 0| - \langle 1|)$.
This notation is reflected in the models. That is, for all $\alpha$, the truth of ${\sf M}_{\alpha}p$, ${\sf M}_{\alpha}q$, ${\sf M}_{\alpha}r$, and ${\sf M}_{\alpha}s$ coincides in both models.
(${\sf M}_{0.5}t$ is true but ${\sf M}_{0.6}t$ is false for all $t \in \{p,q,r,s\}$.)
QLM can express threshold statements concerning measurement probabilities in a mixed ensemble, such as ``the probability of obtaining outcome $p$ is at least 2/3,'' while simultaneously referring to transition probabilities at each underlying pure state. On the other hand, the present language cannot directly express entanglement between subsystems, unitary time evolution, or sequential measurements.

We refer to the version of QPM with the following restrictions as $N_I$-QPM.

\begin{description}
\item $|W| \geqq N_I$,

\item For all $x \in W$, $P(x) = 1/N_I$ or $P(x) = 0$.

\end{description}
Mixed states that are not uniform mixtures can also be represented in 
$N_I$-QPM.
For example, a mixed state in which $| 0 \rangle$ has probability 1/4 and 
$| 1 \rangle$ has probability 3/4 can be represented by the following method, which duplicates the ``frame''. (If the values of $R$ between possible worlds in different frames are set to zero, they do not affect the truth of formulas of $\square_{\gamma}$ in each frame.)

\begin{description}
\item $W = \{x_1, y_1, z_1, w_1, x_2, y_2, z_2, w_2, x_3, y_3, z_3, w_3\}$
\item $R(x_i,z_i) = R(x_i,w_i) = R(y_i,z_i) = R(y_i,w_i) = 0.5$. Otherwise, if $u 
\neq v$, $R(u,v) = 0$
\item $V(p) = \{ x_1, x_2, x_3\}, V(q) = \{ y_1, y_2, y_3\}, V(r) = \{z_1, z_2, z_3\}, V(s) = \{ w_1, w_2, w_3\}$
\item $P(x_1) = P(y_1) = P(y_2) = P(y_3) = 0.25$. Otherwise, $P(v) = 0$
\end{description}

\noi
Therefore, the choice of QPM or $N_I$-QPM is arbitrary and can be adapted to suit the intended purpose.
From this point on, we restrict our attention to $N_I$-QPM.

We define that $\phi$ is valid in QLM (for $I,J$) if $\phi$ is true at all $x \in W$ such that $P(x) = 1/N_I$ of all $N_I$-QPM.
The point with $P(x) = 1/N_I$ represents an abstract pure state within the mixed state.
Therefore, this notion of validity can be understood as being grounded in truth values evaluated in actual pure states. 
We give some examples of valid formulas below.

\begin{description}
\item ${\sf M}_{1} A \to \Diamond_{1} A$ \ \ (If $A$ holds with certainty in the sense of mixed states, then $A$ also holds with certainty in the pure state under consideration.)

\item $A \to {\sf M}_{m} A$ \ \ (In a mixed state that contains a pure state in which $A$ is true, the probability that $A$ holds is nonzero.)

\item ${\sf M}_{\alpha} \sim A \to \neg {\sf M}_{1} A$ \ (for all $\alpha \in I_J \setminus \{0\}$ )
\ \ (If a proposition contrary to $A$ holds with nonzero probability, then $A$ cannot hold with probability $1$.)

\end{description}

\section{Sequent calculus for QLM}

We now construct a proof system for QLM. Several proof-theoretic methods have been proposed for incorporating probabilities into deductive systems. One common method is to attach numerical labels to sequents. For example, as discussed in the previous section, \cite{Boricic1} uses sequents of the form  $\GA \Rightarrow^n \DE$, with the intended meaning that
the probability that $\bigvee \DE$ given that $\bigwedge \GA$, is at least $1-n\epsilon$.
Such labelled sequents are useful for probabilistic reasoning over an underlying classical logic. However, they are not sufficient for the present setting, where the proof system must simultaneously represent two different structures: the relational structure of pure-state transitions and the distributional structure of mixed states. We therefore combine two proof-theoretic devices. Nested sequents are used to represent the pure-state relational component, following \cite{TKMB}, while hypersequents are used to represent the mixed-state distributional component. We begin by recalling the nested sequents for the pure-state fragment.

A {\it sequent} is defined as a structure $\GA \Rightarrow \DE$ consisting of two sets of formulas, 
$\GA$ and $\DE$, and an arrow symbol 
$\Rightarrow$.
$\GA \Rightarrow \DE$ is {\it false} at $x$ if all $A \in \GA$ are true at $x$
and all $A \in \DE$ are false at $x$.
If $\GA \Rightarrow \DE$ is not false at $x$, then it is {\it true} at $x$.
Nested sequents are defined inductively as follows:

\begin{enumerate}
\item A sequent is a nested sequent.
\item If $\GA \Rightarrow \DE$ is a sequent and
$\mathcal{T}$ is a finite set of nested sequents enclosed by numerical brackets $[\ ]_{\gamma}\ (\gamma \in J \setminus \{0,1\})$, then $\GA \Rightarrow \DE, \mathcal{T}$ is a nested sequent.
\item These are the only objects that qualify as nested sequents.
\end{enumerate}

\noi
For example, if $J = \{ 0,0.1,0.2,\ldots,1 \}$, then
$p,\square_{0.8} r \Rightarrow q, [ \Rightarrow r, [p \wedge \square_{0.1} q,r \Rightarrow q]_{0.5}]_{0.2}, [r \Rightarrow \neg p]_{0.7} $
is a nested sequent. (If we denote this nested sequent by $\GA \Rightarrow \DE, \mathcal{T}$, then $\GA \Rightarrow \DE = p,\square_{0.8} r \Rightarrow q$, and $\mathcal{T} = \{ [ \Rightarrow r, [p \wedge \square_{0.1} q,r \Rightarrow q]_{0.5}]_{0.2}, [r \Rightarrow \neg p]_{0.7} \}$.)
Since a nested sequent has a hierarchical structure represented by numerical brackets, it can be regarded as a {\it tree} with weighted edges.
From this perspective, each sequent corresponds to a {\it node}.
In the above example, $(p,\square_{0.8} r \Rightarrow q)$, $(\Rightarrow r)$, $(p \wedge \square_{0.1} q,r \Rightarrow q)$, and $(r \Rightarrow \neg p)$ are nodes and $(p,\square_{0.8} r \Rightarrow q)$ is the {\it root}.

The meaning of a nested sequent is determined by the translation $T$ into a formula.

\begin{description}

\item $T (\GA \Rightarrow \DE) = \bigwedge \GA \to \bigvee \DE$.

\item $T (\GA \Rightarrow \DE, [\GA_1 \Rightarrow \DE_1, \mathcal{T}_1]_{(\gamma, 1)}, \ldots,  [\GA_n \Rightarrow \DE_n, \mathcal{T}_n]_{(\gamma, n)} )$

$= T (\GA \Rightarrow \DE) \vee  \square_{(\gamma, 1)} T (\GA_1 \Rightarrow \DE_1, \mathcal{T}_1) \vee \ldots \vee \square_{(\gamma, n)} T (\GA_n \Rightarrow \DE_n, \mathcal{T}_n) $.

\end{description}

We now extend these notions within the framework of hypersequents.
We define {\it ($N$-fold) multi-level nested sequent} (MN-sequent) as the following structure with the condition; sequents $\GA_i \Rightarrow \DE_i$ may include both
$A,B, \ldots$ and $\phi, \psi, \ldots$ but the formulas appearing in $\mathcal{T}_i$ are restricted to 
$A,B,\ldots$.

$$\GA_1 \Rightarrow \DE_1,\mathcal{T}_1| \GA_2 \Rightarrow \DE_2,\mathcal{T}_2 | \ldots | \GA_{N} \Rightarrow \DE_{N},\mathcal{T}_N$$

\noi
That is, an MN-sequent is a tuple of nested sequents with the condition. 
Treating a nested sequent as a tree structure, 
an MN-sequent may be regarded as an aggregate of several tree structures.
We say that these $\GA_i \Rightarrow \DE_i$ are {\it root sequents}.

The following abbreviations for an MN-sequent are employed.
When writing the full MN-sequent would be redundant, we write 
$G_M$ for $\GA_1 \Rightarrow \DE_1,\mathcal{T}_1| \GA_2 \Rightarrow \DE_2,\mathcal{T}_2 | \ldots | \GA_{N} \Rightarrow \DE_{N},\mathcal{T}_N$.
To indicate that we are focusing on a specific root sequent $\GA_i \Rightarrow \DE_i$ of $G_M$, we write 
MN-sequent $G_M$ as ${\SEQ{G_M \{ | \GA_i}{\DE_i, \mathcal{T}_i | \}}}$.
To indicate that we are focusing on a ``sub-tree'' $\GA \Rightarrow \DE, \mathcal{T}$ (includes the case of the entire tree) of one of the trees in $G_M$, 
we write MN-sequent $G_M$ as
${\SEQ{G_M \{ \GA}{\DE, \mathcal{T} \}}}$.
Hereafter, sequents with the same content are considered distinct if they are different nodes.
For example, in $p,r \Rightarrow q, [ u \Rightarrow r, [p,r \Rightarrow q]]_{0.2}\ |\ q \Rightarrow p $, although $p,r \Rightarrow q$ appears twice, if we denote the first occurrence by $\GA \Rightarrow \DE$, the second occurrence is distinguished and denoted by $\GA' \Rightarrow \DE'$.
If a sequent $\GA \Rightarrow \DE$ is a node of one of the trees in 
$G_M$, we say that 
$\GA \Rightarrow \DE$ {\it occurs} in $G_M$.
If the sequent $\GA \Rightarrow \DE$ occurs in $G_M$, we write $\GA \Rightarrow \DE \sqsubset G_M$. 

A bracketed sequent $[\GA \Rightarrow \DE]_\gamma$ attached to a node represents another pure state whose transition weight from the parent state is at least $\gamma$. Thus, a nested sequent describes a finite tree of possible pure states connected by lower bounds on transition probabilities. An MN-sequent then collects \(N_I\) such trees, corresponding to the pure states occurring with nonzero probability in the mixed state.
We define the {\it embedding} $\mathcal{E}$ of ($N_I$-fold) $G_M$ into 
an $N_I$-QPM $\langle W, R, V, P \rangle $ as follows:

\begin{enumerate}
\item $\mathcal{E}$ is a function from sequents occurring in $G_M$ to $W$.
\item Each root sequent is mapped to $x \in W$ satisfying $P(x) \neq 0$.
\item Each root sequent is mapped to a distinct element. (By the definition of $N_I$-QPM, exactly $N_I$ elements satisfy $P(x) \neq 0$.)
\item If $\GA \Rightarrow \DE$ and $\GA' \Rightarrow \DE'$ are connected by an edge in a tree (that is, if these sequents occur in $G_M$ in the form of 
$\GA \Rightarrow \DE, [ \GA' \Rightarrow \DE',\mathcal{T'}]_{\gamma}, \mathcal{T}$),
and $R(\mathcal{E}(\GA \Rightarrow \DE),\mathcal{E}(\GA' \Rightarrow \DE')) = \gamma'$, 
then $\gamma \leqq \gamma'$.

\end{enumerate}

We define that $G_M$ is false under $\mathcal{E}$ if for all 
$\GA \Rightarrow \DE \sqsubset G_M$, $\GA \Rightarrow \DE$ is false at $\mathcal{E}(\GA \Rightarrow \DE)$.
If $G_M$ is not false under $\mathcal{E}$, then it is true under $\mathcal{E}$.
If $G_M$ is true under any $\mathcal{E}$ for all $N_I$-QPM, then it is valid in $N_I$-QPM.
When it is clear from the context, we simply write
$\mathcal{E}(i)$ for $\mathcal{E}(\GA_i \Rightarrow \DE_i)$.

MN-sequent calculus {\bf MSQLM} is defined as follows:

\noi
Axiom: 

\begin{center} $G_M \{ \phi \Rightarrow \phi, \mathcal{T} \}^{*} $\end{center} 

\noi 
Rules:




$$
\infer[\mbox{(cut)}]{\SEQ{G_M \{  \GA}{\DE, \mathcal{T}  \}}}{
 \SEQ{G_M \{  \GA}{\DE, \phi,  \mathcal{T}  \}} & \SEQ{G_M \{ \phi,  \GA }{\DE , \mathcal{T}   \}}}
$$ 
$$
\infer[\mbox{(wL)$^{*}$}]{\SEQ{G_M \{  \phi,  \GA }{\DE , \mathcal{T}   \}}}
 {\SEQ{G_M \{ \GA }{\DE , \mathcal{T}   \}}}
\hspace{5mm}
\infer[\mbox{(wR)$^{*}$}]{\SEQ{G_M \{  \GA }{\DE , \phi,  \mathcal{T}   \}}}
 {\SEQ{G_M \{  \GA }{\DE , \mathcal{T}   \}}}
$$
$$
\infer[\mbox{($\neg$L)$^{}$}]{\SEQ{G_M \{ \neg \phi,  \GA }{\DE , \mathcal{T}   \}}}
 {\SEQ{G_M \{ \GA }{\DE , \phi,  \mathcal{T}  \}}}
\hspace{5mm}
\infer[\mbox{($\neg$R)$^{}$}]{\SEQ{G_M \{ \GA }{\DE , \neg \phi,  \mathcal{T}  \}}}
 {\SEQ{G_M \{ \phi,  \GA }{\DE , \mathcal{T}  \}}}
$$
$$
\infer[\mbox{(\AND L)$^{}$}]{\SEQ{G_M \{ \phi \wedge \psi, \GA }{\DE , \mathcal{T}  \}}}
 {\SEQ{G_M \{ \phi,  \psi, \GA }{\DE , \mathcal{T}  \}}}
\hspace{5mm}
\infer[\mbox{(\AND R)}]{\SEQ{G_M \{ \GA }{\DE, \phi \wedge \psi, \mathcal{T}  \}}}{
 \SEQ{G_M \{ \GA }{\DE, \phi, \mathcal{T} \}} & \SEQ{G_M \{ \GA }{\DE, \psi, \mathcal{T} \}}}
$$
$$
\infer[\mbox{($\square$ R) $^{}$}]{\SEQ{G_M \{ \GA}{\DE, \square_{\gamma} A, \mathcal{T} \}}}{
 \SEQ{G_M \{ \GA}{\DE, [\Rightarrow A]_{\gamma}, \mathcal{T} \}}}
\hspace{5mm}
\infer[\mbox{($\square$ R self) }]{\SEQ{G_M \{ \GA}{\DE, \square_{1} A, \mathcal{T} \}}}{
 \SEQ{G_M \{ \GA}{\DE, A, \mathcal{T} \}}}
\hspace{5mm}
$$
$$
\infer[\mbox{($\square$ L) $^{**}$}]{\SEQ{G_M \{ \square_{\gamma} A, \GA}{\DE, \ [\GA' \Rightarrow \DE', \mathcal{T'}]_{\gamma'}, \mathcal{T} \}}}{
 \SEQ{G_M \{ \GA}{\DE, [A, \GA' \Rightarrow \DE', \mathcal{T'}]_{\gamma'}, \mathcal{T} \}}}
\hspace{5mm}
\infer[\mbox{($\square$ L self) $^{}$}]{\SEQ{G_M \{ \square_{\gamma} A, \GA}{\DE, \mathcal{T} \}}}{
 \SEQ{G_M \{ A, \GA}{\DE, \mathcal{T} \}}}
$$
$$
\infer[\mbox{($\square$ L sym) $^{**}$}]{\SEQ{G_M \{ \GA}{\DE, \ [\square_{\gamma} A, \GA' \Rightarrow \DE', \mathcal{T'}]_{\gamma'}, \mathcal{T} \}}}{
 \SEQ{G_M\{ A, \GA}{\DE, [\GA' \Rightarrow \DE', \mathcal{T'}]_{\gamma'}, \mathcal{T} \}}}
$$
$$
\infer[\mbox{(${\sf M}$L) $^{}$}]{\SEQ{G_M \{ | {\sf M_\alpha} A , \GA_i}{\DE_i, \mathcal{T}_i| \}}}{
 MSQ_1}
\hspace{5mm}
\infer[\mbox{(${\sf M}$R) $^{}$}]{\SEQ{G_M \{ | \GA_i}{\DE_i, {\sf M_\alpha} A, \mathcal{T}_i | \}}}{
 MSQ_2}
$$
\begin{description}

\item * If the displayed sequent is not the root sequent, then $\phi$ is restricted to $A,B,\ldots$.

\item ** $\gamma \leqq \gamma'$ 


\end{description}

\begin{description}

\item[(1)] $MSQ_1$ is the set of all MN-sequents satisfying the following conditions:

\begin{itemize}


\item An MN-sequent obtained by applying the following procedure to 
$G_M \{ | \GA_i \Rightarrow \DE_i, \mathcal{T}_i| \}$.
For some $i$, we specify $\Diamond_{(\gamma,i)} A$ in the left-hand side of the root of $i$-th nested-sequent.
This $\Diamond_{(\gamma,i)} A$ may be selected from those $\Diamond_{\delta} A$ already contained in $\GA_i$, or it may be introduced as a new element.


\item $\sum( (1/N_I) \times \gamma_i )= \alpha$. \ \ ($\sum$ ranges over those $i$ for which $\Diamond_{(\gamma,i)} A$ is selected.)

\end{itemize}


\item[(2)\label{cMR}]
$MSQ_2$ is the set of all MN-sequents satisfying the following conditions:

\begin{itemize}
\item An MN-sequent obtained by applying the following procedure to 
$G_M \{ | \GA_i \Rightarrow \DE_i, \mathcal{T}_i| \}$.
For some $i$, we specify $\Diamond_{(\gamma,i)} A$ in the right-hand side of the root of $i$-th nested-sequent.
This $\Diamond_{(\gamma,i)} A$ may be selected from those $\Diamond_{\delta} A$ already contained in $\DE_i$, or it may be introduced as a new element.

\item $\sum_{i=1}^{N_I} S_{i} = N_I \times \alpha + (N_I - 1)(1/N_J) $, where
\begin{eqnarray}
S_i =
  \left\{
    \begin{array}{l}
      \gamma_i \ ({\rm if} \ \Diamond_{(\gamma, i)} A {\rm \ is \ specified \ for \ that \ } i) \\
      1 + (1/N_J) \ ({\rm else})
    \end{array}
  \right.
\end{eqnarray}


\end{itemize}

\end{description}

Note that in ({\sf M}L) and ({\sf M}R), ${\sf M}_\alpha A$ can be introduced into an arbitrary $i$-th sequent.
The rules for $\square_{\gamma}$ are extensions of the rules appearing in \cite{TKMB}.
Intuitively, these conditions of (${\sf M}$L) and (${\sf M}$R) capture all possible probability distributions related to ${\sf M}_\alpha A$, thereby ensuring soundness.
In the following example, 
all occurrences of $\Diamond_{(\gamma,i)} A$ are explicitly written for clarity although $\Diamond_{(\gamma,i)} A$ may be included in $\GA_i$ (or $\DE_i$) in some cases.
In the sequent calculus, the premises of a rule are typically displayed horizontally. In the present case, however, they are arranged vertically due to their number.

\ 

\noi
Example 1 of (${\sf M}$L) \ \ \ \ $N_I = 2, N_J = 10$, $\alpha = 0.2$ 

$$
\infer[\mbox{(${\sf M}$L) $^{}$}]{\GA_1 \Rightarrow \DE_1, \mathcal{T}_1 \  |\ {\sf M_{0.2}} A, \GA_2 \Rightarrow \DE_2, \mathcal{T}_2}
{
  \deduce{ \GA_1 \Rightarrow \DE_1, \mathcal{T}_1 \ | \Diamond_{0.4} A, \GA_2 \Rightarrow \DE_2, \mathcal{T}_2}
  {
  \deduce{\Diamond_{0.1} A, \GA_1 \Rightarrow \DE_1, \mathcal{T}_1 \ | \Diamond_{0.3} A, \GA_2 \Rightarrow \DE_2, \mathcal{T}_2}
  {
  \deduce{\Diamond_{0.2} A, \GA_1 \Rightarrow \DE_1, \mathcal{T}_1 \ | \Diamond_{0.2} A, \GA_2 \Rightarrow \DE_2, \mathcal{T}_2}
  {
  \deduce{\Diamond_{0.3} A, \GA_1 \Rightarrow \DE_1, \mathcal{T}_1 \ | \Diamond_{0.1} A, \GA_2 \Rightarrow \DE_2, \mathcal{T}_2}
  {
\Diamond_{0.4} A, \GA_1 \Rightarrow \DE_1, \mathcal{T}_1 \ |\GA_2 \Rightarrow \DE_2, \mathcal{T}_2}}
  }}
}
$$


\noi
Example 1 of (${\sf M}$R) \ \ \ \ $N_I = 2, N_J = 10$, $\alpha = 0.5$, $2 \times 0.5 + (2 - 1)(1/10) = 1.1$

$$
\infer[\mbox{(${\sf M}$R) $^{}$}]{\GA_1 \Rightarrow \DE_1, {\sf M_{0.5}} A, \mathcal{T}_1 \  |  \GA_2 \Rightarrow \DE_2, \mathcal{T}_2}
{ 
  \deduce{\GA_1 \Rightarrow \DE_1, \Diamond_{1} A, \mathcal{T}_1 \ |  \GA_2 \Rightarrow \DE_2,\Diamond_{0.1} A, \mathcal{T}_2}
  {
  \deduce{\GA_1 \Rightarrow \DE_1, \Diamond_{0.9} A, \mathcal{T}_1 \ |  \GA_2 \Rightarrow \DE_2,\Diamond_{0.2} A, \mathcal{T}_2}
  {
  \deduce{\GA_1 \Rightarrow \DE_1, \Diamond_{0.8} A, \mathcal{T}_1 \ |  \GA_2 \Rightarrow \DE_2,\Diamond_{0.3} A, \mathcal{T}_2}
  {
  \deduce{\GA_1 \Rightarrow \DE_1, \Diamond_{0.7} A, \mathcal{T}_1 \ |  \GA_2 \Rightarrow \DE_2,\Diamond_{0.4} A, \mathcal{T}_2}
  {
  \deduce{\GA_1 \Rightarrow \DE_1, \Diamond_{0.6} A, \mathcal{T}_1 \ |  \GA_2 \Rightarrow \DE_2,\Diamond_{0.5} A, \mathcal{T}_2}
  {
  \deduce{\GA_1 \Rightarrow \DE_1, \Diamond_{0.5} A, \mathcal{T}_1 \ |  \GA_2 \Rightarrow \DE_2,\Diamond_{0.6} A, \mathcal{T}_2}
  {
  \deduce{\GA_1 \Rightarrow \DE_1, \Diamond_{0.4} A, \mathcal{T}_1 \ |  \GA_2 \Rightarrow \DE_2,\Diamond_{0.7} A, \mathcal{T}_2}
  {
  \deduce{\GA_1 \Rightarrow \DE_1, \Diamond_{0.3} A, \mathcal{T}_1 \ |  \GA_2 \Rightarrow \DE_2,\Diamond_{0.8} A, \mathcal{T}_2}
  {
  \deduce{\GA_1 \Rightarrow \DE_1, \Diamond_{0.2} A, \mathcal{T}_1 \ |  \GA_2 \Rightarrow \DE_2,\Diamond_{0.9} A, \mathcal{T}_2}
  {
  \GA_1 \Rightarrow \DE_1, \Diamond_{0.1} A, \mathcal{T}_1 \ |  \GA_2 \Rightarrow \DE_2,\Diamond_{1} A, \mathcal{T}_2}
  }}}
  }}}
  }}
}
$$

\noi
Example 2 of (${\sf M}$R) \ \ \ \ $N_I = 3, N_J = 10$, $\alpha = 0.5$, $3 \times 0.5 + (3 - 1)(1/10) = 1.7$

$$
\infer[\mbox{(${\sf M}$R) $^{}$}]{\GA_1 \Rightarrow \DE_1, \mathcal{T}_1 \  |  \GA_2 \Rightarrow \DE_2, {\sf M_{0.5}} A, \mathcal{T}_2 |  \GA_3 \Rightarrow \DE_3, \mathcal{T}_3}
{
  \deduce{\GA_1 \Rightarrow \DE_1, \mathcal{T}_1 \ |  \GA_2 \Rightarrow \DE_2,\Diamond_{0.5} A, \mathcal{T}_2|  \GA_3 \Rightarrow \DE_3,  \Diamond_{0.1} A , \mathcal{T}_3}
  {
  \deduce{\GA_1 \Rightarrow \DE_1, \mathcal{T}_1 \ |  \GA_2 \Rightarrow \DE_2,\Diamond_{0.4} A, \mathcal{T}_2|  \GA_3 \Rightarrow \DE_3,  \Diamond_{0.2} A , \mathcal{T}_3}
  {
  \deduce{
   \vdots}
  {
  \deduce{\GA_1 \Rightarrow \DE_1, \Diamond_{0.3} A, \mathcal{T}_1 \ |  \GA_2 \Rightarrow \DE_2,\Diamond_{1} A, \mathcal{T}_2|  \GA_3 \Rightarrow \DE_3,  \Diamond_{0.4} A , \mathcal{T}_3}
  {
  \deduce{\GA_1 \Rightarrow \DE_1, \Diamond_{0.3} A, \mathcal{T}_1 \ |  \GA_2 \Rightarrow \DE_2,\Diamond_{0.9} A, \mathcal{T}_2|  \GA_3 \Rightarrow \DE_3,  \Diamond_{0.5} A , \mathcal{T}_3}
  {
  \deduce{
   \vdots}
  {
  \deduce{\GA_1 \Rightarrow \DE_1, \Diamond_{0.2} A, \mathcal{T}_1 \ |  \GA_2 \Rightarrow \DE_2,\Diamond_{0.4} A, \mathcal{T}_2| \GA_3 \Rightarrow \DE_3, \mathcal{T}_3}
  {
\GA_1 \Rightarrow \DE_1, \Diamond_{0.1} A, \mathcal{T}_1 \ |  \GA_2 \Rightarrow \DE_2,\Diamond_{0.5} A, \mathcal{T}_2|  \GA_3 \Rightarrow \DE_3, \mathcal{T}_3}}
  }}}
  }}
}
$$

\noi
Example of proof of ${\sf M}_{0.2} \sim A \to \neg {\sf M}_{1} A$

We assume that $I = \{ 0,0.5,1 \}$, and $J = \{ 0,0.1,0.2,\ldots,1 \}$. Therefore, $m=0.1$ and $\sim A = \square_{0.1} \neg A$.

$$
\infer[\mbox{(w L)}]{\SEQ{\neg \square_{1} \neg A, \neg \square_{0.4}  \Diamond_{0.1} A }{| \neg \square_{1} \neg A \Rightarrow }}
{
\infer[\mbox{($\neg$ L)}]{\SEQ{\neg \square_{1} \neg A, \neg \square_{0.4}  \Diamond_{0.1} A }{| \Rightarrow }}
{
\infer[\mbox{($\neg$ L)}]{\SEQ{\neg \square_{1} \neg A }{\square_{0.4}  \Diamond_{0.1} A | \Rightarrow }}
{
\infer[\mbox{($\square$ R self)}]{\SEQ{ }{\square_{0.4}  \Diamond_{0.1} A, \square_{1} \neg A | \Rightarrow }}
{
\infer[\mbox{($\neg$ R)}]{\SEQ{ }{\square_{0.4}  \Diamond_{0.1} A, \neg A | \Rightarrow }}
{
\infer[\mbox{($\square$ R)}]{\SEQ{ A}{\square_{0.4} \Diamond_{0.1} A | \Rightarrow }}
{
\infer[\mbox{($\neg$ R)}]{\SEQ{ A}{ [\Rightarrow \neg \square_{0.1} \neg A ]_{0.4} | \Rightarrow }}
{
\infer[\mbox{($\square$ L sym)}]{\SEQ{ A}{[\square_{0.1} \neg A \Rightarrow ]_{0.4} | \Rightarrow }}
{
\infer[\mbox{($\neg$ L)}]{\SEQ{\neg A, A}{[\Rightarrow ]_{0.4} | \Rightarrow }}
{
 {\SEQ{A}{A, [\Rightarrow ]_{0.4} | \Rightarrow }}
}
}
}
}
}
}
}
}
}
$$

($\neg \square_{1} \neg A, \neg \square_{0.4}  \Diamond_{0.1} A \Rightarrow  | \neg \square_{1} \neg A \Rightarrow $ is equal to $\Diamond_{1} A, \Diamond_{0.4} \sim A \Rightarrow \ |\Diamond_{1} A  \Rightarrow$.)
A similar proof can be carried out for $\square$ other than 0.4, and we do so in the five cases.
We then proceed as follows.

$$
\infer[\mbox{(Classical logic) $^{}$}]{\Rightarrow {\sf M_{0.2}} \sim A \to {\sf \neg M_{1}} A  |  \Rightarrow  \ }
{
\infer[\mbox{($\neg$R) $^{}$}]{{\sf M_{0.2}} \sim A, \Rightarrow {\sf \neg M_{1}} A  |  \Rightarrow  \ }
{
\infer[\mbox{$({\sf M}$L) $^{}$}]{{\sf M_{1}} A, {\sf M_{0.2}} \sim A, \Rightarrow   |  \Rightarrow  \ }
{
\infer[\mbox{(${\sf M}$L) $^{}$}]{ \Diamond_{1} A, {\sf M_{0.2}} \sim A, \Rightarrow    | \Diamond_{1} A \Rightarrow }
{
  \deduce{ \Diamond_{1} A \Rightarrow \ | \Diamond_{1} A, \Diamond_{0.4} \sim A \Rightarrow}
  {
  \deduce{\Diamond_{1} A , \Diamond_{0.1} \sim A \Rightarrow \ | \Diamond_{1} A, \Diamond_{0.3} \sim A \Rightarrow }
  {
  \deduce{\Diamond_{1} A , \Diamond_{0.2} \sim A \Rightarrow \ | \Diamond_{1} A, \Diamond_{0.2} \sim A \Rightarrow }
  {
    \deduce{\Diamond_{1} A , \Diamond_{0.3} \sim A \Rightarrow \ | \Diamond_{1} A, \Diamond_{0.1} \sim A \Rightarrow }
  {
\Diamond_{1} A , \Diamond_{0.4} \sim A \Rightarrow \ |\Diamond_{1} A  \Rightarrow }}
  }}
}
}
}
}
$$


\begin{theorem}[Soundness]\label{soundMSQLM}
If $\GA_1 \Rightarrow \DE_1,\mathcal{T}_1| \GA_2 \Rightarrow \DE_2,\mathcal{T}_2 | \ldots | \GA_{N} \Rightarrow \DE_{N},\mathcal{T}_N$ is provable in {\bf MSQLM}, then $\GA_1 \Rightarrow \DE_1,\mathcal{T}_1| \GA_2 \Rightarrow \DE_2,\mathcal{T}_2 | \ldots | \GA_{N} \Rightarrow \DE_{N},\mathcal{T}_N$ is valid.
\end{theorem}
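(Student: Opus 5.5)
The plan is to argue by induction on the height of a derivation in {\bf MSQLM}, showing that every axiom is valid and that each rule preserves validity. The rule-preservation steps will be contrapositive: fix an $N_I$-QPM $\langle W,R,V,P\rangle$ and an embedding $\mathcal{E}$ under which the conclusion is false, i.e.\ every node $\GA \Rightarrow \DE \sqsubset G_M$ is false at its image. From this I will produce an embedding, possibly of a modified MN-sequent and possibly extending $\mathcal{E}$ to new nodes, under which some premise is false. In the case of (${\sf M}$L) and (${\sf M}$R), the premise will be chosen from the set $MSQ_1$ or $MSQ_2$.

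\textbf{Axioms and Boolean rules.} An axiom is true under every $\mathcal{E}$, because the node $\phi \Rightarrow \phi$ cannot be false at $\mathcal{E}$ of that node. Weakening, ($\neg$L/R), ($\wedge$L/R) and (cut) are handled locally at the displayed node: the same $\mathcal{E}$ works, since the truth conditions for $\neg,\wedge$ are classical. The restriction marked $^{*}$ only ensures that non-root nodes contain pure-state formulas, so that they are evaluated at worlds without trouble.

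\textbf{Modal rules.} For ($\square$R), falsity of $\square_\gamma A$ at $x=\mathcal{E}(\GA\Rightarrow\DE)$ yields $y$ with $R(x,y)\geq\gamma$ and $y\not\models A$. I then extend $\mathcal{E}$ by mapping the new node $\Rightarrow A$ to $y$; the edge condition (4) holds. For ($\square$R self) I use reflexivity, $R(x,x)=1$. For ($\square$L), the condition $\gamma\leq\gamma'\leq R(x,\mathcal{E}(\text{child}))$ together with $x\models\square_\gamma A$ gives $A$ at the child. For ($\square$L sym) the same argument works, using symmetry of $R$. ($\square$L self) again uses reflexivity.

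\textbf{Mixed-state rules.} This is the main obstacle. Write $x_i=\mathcal{E}(i)$; these are exactly the $N_I$ worlds with $P=1/N_I$. Let $q_i=\max(\{0\}\cup\{\gamma\in J\setminus\{0\} : x_i\models\Diamond_\gamma A\})$. Since ${\sf M}_\alpha A$ is evaluated globally, its truth at $x_i$ is independent of $i$ and says $\sum_i q_i/N_I\geq\alpha$.

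For (${\sf M}$L) we have $\sum_i q_i/N_I\geq\alpha$. I pick $\gamma_i\leq q_i$ with $\sum_i\gamma_i/N_I=\alpha$, selecting only those $i$ with $\gamma_i>0$. This is possible because the $q_i$ lie in the grid $J$ and $\alpha\in I_J$, so I can decrease the $q_i$ one step of $1/N_J$ at a time until the sum is exactly $\alpha$. Then $x_i\models\Diamond_{\gamma_i}A$, since $\Diamond$ is monotone in the threshold. Hence the corresponding premise in $MSQ_1$ is false under $\mathcal{E}$.

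For (${\sf M}$R) we have $\sum_i q_i/N_I<\alpha$, and since both sides lie in $I_J$, in fact $\sum_i q_i\leq N_I\alpha-1/N_J$. Each premise demands specified $\gamma_i$ with $\Diamond_{\gamma_i}A$ on the right, or $S_i=1+1/N_J$ if nothing is specified. I need a premise in which every specified $\gamma_i>q_i$, so that $x_i\not\models\Diamond_{\gamma_i}A$ and the premise is false. Try first the choice $\gamma_i=q_i+1/N_J$ when $q_i<1$, and leaving $i$ unspecified otherwise, so that in every case $S_i=q_i+1/N_J$. This gives $\sum S_i=\sum q_i+N_I/N_J\leq N_I\alpha+(N_I-1)/N_J$. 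I must reach the exact target $N_I\alpha+(N_I-1)/N_J$, which I will do by raising some $\gamma_i$ further, or by switching a specified index to unspecified. Raising $\gamma_i$ keeps $\gamma_i>q_i$, so falsity is preserved. The bookkeeping needed to show that exact equality is attainable within the bounds $\gamma_i\leq1$ and $S_i\leq1+1/N_J$ is the delicate arithmetic point. Two facts should make it work: the total capacity $N_I(1+1/N_J)$ exceeds the target when $\alpha\leq1$, and all quantities are multiples of $1/N_J$.
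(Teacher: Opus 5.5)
Your proposal is correct and follows the paper's proof. In the (${\sf M}$L) case you choose $\gamma_i \leqq \zeta_i$ summing to $N_I\alpha$, as the paper does. Your (${\sf M}$R) construction (start from $S_i = q_i + 1/N_J$ and raise by steps of $1/N_J$, with ``unspecified'' playing the role of $\tau_i = 1$) is exactly the paper's choice of $\tau_i$ with $S_i = \tau_i + 1/N_J$, run in contrapositive form. The two facts you cite (the starting sum is at most the target, the maximal sum $N_I(1+1/N_J)$ exceeds the target, and everything moves in steps of $1/N_J$) do close the bookkeeping.

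One small correction in the rules the paper omits: ($\square$R self) needs the direction $R(x,y)=1 \Rightarrow x=y$ of the model condition, not merely $R(x,x)=1$.
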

\begin{proof}
We prove the claim by showing that, for every rule, if the upper MN-sequents are valid, then the lower sequent is also valid.
The cases for rules other than $({\sf M}$L) and $({\sf M}$R) are natural extensions of the cases in \cite{TKMB}, and are therefore omitted.

\begin{enumerate}

\item (${\sf M}$L)

Fix an arbitrary $N_I$-QPM and an embedding $\mathcal{E}$ for the conclusion. 
Suppose that all MN-sequents in $MSQ_1$ of (${\sf M}$L) are valid.
If ${\sf M}_\alpha A$ is false, the conclusion is immediately true because its principal formula occurs on the left-hand side. Suppose therefore that ${\sf M}_\alpha A$ is true. 
For each root sequent, let $\zeta_i = \max ( \{ 0 \} \cup \{\gamma \in J \setminus \{0\} |\ \mathcal{E}(i) \models \Diamond_\gamma A \})$.
Then ${\sf M}_\alpha A$ is true iff $N_I \alpha \leqq \sum_{i=1}^{N_I} \zeta_i$.
Since all $\zeta_i$ and $N_I \alpha$ are multiples of $1/N_J$, we can choose $\gamma_i \in J$ such that $0 \leqq \gamma_i \leqq \zeta_i$ and $N_I \alpha = \sum_{i=1}^{N_I} \gamma_i$.
These $\gamma_i$ determine one of the premises in $MSQ_1$.
We consider the same $\mathcal{E}$ for this MN-sequent.
Then, from $\gamma_i \leqq \zeta_i$, every formula $\Diamond_{(\gamma,i)}A$ in this MN-sequent is true at $\mathcal{E}(i)$.
Because this MN-sequent is valid, ${\SEQ{G_M \{ | \GA_i}{\DE_i, \mathcal{T}_i| \}}}$ is true under $\mathcal{E}$. Therefore, ${\SEQ{G_M \{ | {\sf M_\alpha} A , \GA_i}{\DE_i, \mathcal{T}_i| \}}}$ is also true.
Since this argument holds for any $\mathcal{E}$, ${\SEQ{G_M \{| {\sf M_\alpha} A, \GA_i}{\DE_i, \mathcal{T}_i |\}}}$ is valid.

(For example, let us consider Example 1 of (${\sf M}$L).
The statement that ${\sf M}_{0.2}A$ is true is equivalent to $\Diamond_{0.4} A$ is true at $\mathcal{E}(1)$, or,  $\Diamond_{0.3} A$ is true at $\mathcal{E}(1)$ and  $\Diamond_{0.1} A$ is true at $\mathcal{E}(2)$, or, $\ldots$, or, $\Diamond_{0.4} A$ is true at $\mathcal{E}(2)$. Therefore, 
validity of all upper sequents implies that if ${\sf M}_{0.2}A$ is true, then ${\GA_1 \Rightarrow \DE_1, \mathcal{T}_1 \ |\GA_2 \Rightarrow \DE_2, \mathcal{T}_2}$ is true. 
Therefore, ${\GA_1 \Rightarrow \DE_1, \mathcal{T}_1 \ |{\sf M_{0.2}}A, \GA_2 \Rightarrow \DE_2, \mathcal{T}_2}$ is true.)

\item (${\sf M}$R)

Fix an arbitrary $N_I$-QPM and an embedding $\mathcal{E}$ for the conclusion, and define $\zeta_i$ as above. 

Suppose that all MN-sequents in $MSQ_2$ of (${\sf M}$R) are valid.
If ${\SEQ{G_M \{ | \GA_i}{\DE_i, \mathcal{T}_i | \}}}$ is true under $\mathcal{E}$, then ${\SEQ{G_M \{ | \GA_i}{\DE_i, {\sf M_\alpha} A, \mathcal{T}_i | \}}}$ is clearly true under $\mathcal{E}$.
Suppose that ${\SEQ{G_M \{ | \GA_i}{\DE_i, \mathcal{T}_i | \}}}$ is false under $\mathcal{E}$. 
Assume, for contradiction, that ${\sf M_\alpha} A$ is false. Then, $\sum_{i=1}^{N_I} \zeta_i < N_I \alpha$.
Since both sides are multiples of $1/N_J$, $\sum_{i=1}^{N_I} \zeta_i \leqq N_I \alpha - 1/N_J$.
Hence we can choose $\tau_i \in J$ such that $\zeta_i \leqq \tau_i \leqq 1$ and
$\sum_{i=1}^{N_I} \tau_i = N_I \alpha-1/N_J$:
starting with $\tau_i = \zeta_i$, increase some $\tau_i < 1$ successively by $1/N_J$ until the required sum is reached. Define $S_i = \tau_i + 1/N_J$; when $\tau_i = 1$, interpret $S_i = 1 + 1/N_J$ as the case in which no formula is added to the $i$-th root sequent. Then
$\sum_{i=1}^{N_I} S_i = N_I \alpha + (N_I-1)(1/N_J)$,
so these $S_i$ determine a premise in $MSQ_2$.
We consider the same $\mathcal{E}$ for this MN-sequent.
If $S_i=\gamma_i$, then $\zeta_i < \gamma_i$, and hence $\Diamond_{\gamma_i}A$ is false at $\mathcal{E}(i)$. Since ${\SEQ{G_M \{ | \GA_i}{\DE_i, \mathcal{T}_i | \}}}$ is false under $\mathcal{E}$, this MN-sequent (resulting premise in $MSQ_2$) is also false, contradicting its validity. Therefore ${\sf M_\alpha} A$ must be true, and hence the conclusion is true.

(For example, in the case of Example 2 of (${\sf M}$R), among the patterns where ${\sf M_{0.5}} A$ is false,
the first sequent corresponds to the pattern in which $\Diamond_{0.1} A$ is false at $\mathcal{E}(1)$
and $\Diamond_{0.5} A$ is false at $\mathcal{E}(2)$.
That is, from the definition of $x \models \square_{\gamma} A$, at $\mathcal{E}(1)$, $\Diamond_{\gamma} A$ does not hold for any $\gamma$, and at $\mathcal{E}(2)$, at most $\Diamond_{0.4} A$ can hold. 
In this case, even if $\Diamond_{1} A$ is true at $\mathcal{E}(3)$, the total is $(0 + 0.4 + 1)/3 < 0.5$ and does not reach $0.5$.
Hence, the situation that at least one of these $\Diamond_{(\gamma, i)} A$ is true implies that 
${\sf M_{0.5}} A$ cannot be false in this pattern.
Since this reasoning applies to all upper sequents, ${\sf M_{0.5}} A$ cannot be false.)
\end{enumerate}
\end{proof}

To prove the completeness theorem, we construct a canonical model $(W_c, R_c, V_c, P_c)$ of an unprovable MN-sequent $G_M$.
The canonical model is constructed by a procedure that extends
$G_M$ by successively adding formulas and nodes while preserving unprovability.
This procedure is continued until the MN-sequent is no longer changed by applying any of the steps. 
These steps successively add new formulas or new nodes according to the formulas occurring on the left or right sides of the sequent. The steps are then further applied to these newly added formulas.
At each step, the structural complexity of the formula newly introduced is strictly less than that of the formula on which the step is based.
The complexity of formulas is defined in the usual way for classical operators and for case $\square_{\gamma} A$. For case ${\sf M_{\alpha}} A$, it is defined such that ${\sf M_{\alpha}} A$ has a higher complexity than $\Diamond_{\gamma} A$.
The step of adding a node is performed at most once for each formula.
Therefore, this procedure terminates in finitely many steps.
The sequent obtained at the end of this procedure is denoted by $F(G_M)$.

\noi
The steps of the procedure for constructing  $F(G_M)$:

\begin{enumerate}

\item\label{wedgeL} For the case $\phi \wedge \psi \in \GA$: 
We add $\phi$ and $\psi$ to $\GA$.
This new MN-sequent is also unprovable because of ($\wedge$L).

\item\label{wedgeR} For the case $\phi \wedge \psi \in \DE$: 
Because of ($\wedge$R),
either the MN-sequent obtained by adding $\phi$ to $\DE$ or the sequent obtained by adding $\psi$ to $\DE$ is unprovable.
Of these two, we take the unprovable one as next MN-sequent.

\item\label{negL} For the case $\neg \phi \in \GA$: 
We add $\phi$ to $\DE$.  
This new MN-sequent is unprovable because of ($\neg$L).

\item\label{negR} For the case $\neg \phi \in \DE$: 
We add $\phi$ to $\GA$.  
This new MN-sequent is unprovable because of ($\neg$R).

\item\label{boxL}
If $\square_{\gamma} A \in \GA$, $\gamma \leqq \gamma'$, and $\GA \Rightarrow \DE$ and $\GA' \Rightarrow \DE'$ appear in the MN-sequent in the form of $\GA \Rightarrow \DE, [ \GA' \Rightarrow \DE',\mathcal{T'}]_{\gamma'}, \mathcal{T}$,
then we add $A$ to $\GA'$.
This new MN-sequent is unprovable because of ($\square$L).


\item\label{boxLsym}
If $\square_{\gamma} A \in \GA'$, $\gamma \leqq \gamma'$, and $\GA \Rightarrow \DE$ and $\GA' \Rightarrow \DE'$ appear in the MN-sequent in the form of $\GA \Rightarrow \DE, [ \GA' \Rightarrow \DE',\mathcal{T'}]_{\gamma'}, \mathcal{T}$,
then we add $A$ to $\GA$.
This new MN-sequent is unprovable because of ($\square$L sym).

\item\label{boxLself}
If $\square_{\gamma} A \in \GA$, then we add $A$ to $\GA$.
This new MN-sequent is unprovable because of ($\square$L self).

\item\label{boxR}
If $\square_{\gamma} A \in \DE$ and $(\gamma \neq 1)$, then we add $[\Rightarrow A]_{\gamma}$ to the right side of the sequent where $\DE$ occurs (and $\Rightarrow A$ then becomes a new node).  
This new MN-sequent is unprovable because of the rule ($\square$R).
This step is performed once per occurrence of $\square_{\gamma} A$.


\item\label{boxRself}
If $\square_{1} A \in \DE$, then we add $A$ to $\DE$.
This new MN-sequent is also unprovable because of the rule ($\square$R self).


\item\label{maL} For the case ${\sf M_{\alpha}} A \in \GA_i$

We regard the current MN-sequent as the MN-sequent below (${\sf M}$L). Then at least one of the MN-sequents above (${\sf M}$L) with ${\sf M_{\alpha}} A$ in the left side of $i$-th sequent is unprovable because if all of them are provable, then the current MN-sequent is provable by (${\sf M}$L), and this contradicts the unprovability of current MN-sequent.
Then, we choose one of them and take it as the next MN-sequent; in other words, we add each $\Diamond_{(\gamma, i)} A$ to the root sequents in the current MN-sequent.




\item\label{maR} For the case ${\sf M_{\alpha}} A \in \DE_i$

We regard the current MN-sequent as the MN-sequent below (${\sf M}$R). 
Then, at least one of the MN-sequents above (${\sf M}$R) with ${\sf M_{\alpha}} A$ in the right side of $i$-th sequent is unprovable.
Then, we choose one of them and take it as the next MN-sequent.

\end{enumerate}

The canonical model $(W_c, R_c, V_c, P_c)$ for an unprovable MN-sequent $G_M$ is defined as follows:

\begin{description}


\item $W_c: \ 
\{ \GA \Rightarrow \DE | \GA \Rightarrow \DE \sqsubset F(G_M) \}$

\item $R_c:$ \ 
If $\GA \Rightarrow \DE$ and $\GA' \Rightarrow \DE'$ appear in $F(G_M)$ in the form of 
$\GA \Rightarrow \DE, [ \GA' \Rightarrow \DE',\mathcal{T'}]_{\gamma}, \mathcal{T}$, then $R_c(\GA \Rightarrow \DE, \GA' \Rightarrow \DE') = R_c(\GA' \Rightarrow \DE', \GA \Rightarrow \DE)= \gamma$. $R_c(\GA \Rightarrow \DE, \GA \Rightarrow \DE)= 1$. Otherwise, $R_c(\GA \Rightarrow \DE, \GA' \Rightarrow \DE') = 0$.

\item $V_c:$ \ 
$\GA \Rightarrow \DE \in V_c(p)$ iff $p \in \GA$.

\item $P_c:$ \ 
If $\GA_i \Rightarrow \DE_i$ is a root sequent in the trees
of $F(G_M)$, then $P_c(\GA_i \Rightarrow \DE_i) = 1/N_I$.
Otherwise, $P_c(\GA \Rightarrow \DE) = 0$.

\end{description}

\begin{lemma}\label{TSMBreal} 
$(W_c,R_c, V_c, P_c)$ is a $N_I$-QPM.
\end{lemma}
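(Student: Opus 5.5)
We check the defining clauses of an $N_I$-QPM one at a time against the definition of $(W_c,R_c,V_c,P_c)$. The underlying object is $F(G_M)$, an $N_I$-fold MN-sequent whose trees are finite. The construction steps only add formulas to existing nodes or add new bracketed nodes inside existing trees. They never add or remove a root sequent. Hence $F(G_M)$ still has exactly $N_I$ root sequents. We also use the paper's convention that nodes with the same content but different positions are distinct, so each root sequent is its own element of $W_c$.

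For $R_c$ we verify three things. First, $R_c$ is a well-defined function $W_c\times W_c\to J$: in a tree each pair of distinct nodes is joined by at most one edge, so at most one bracket label $\gamma$ is assigned to the pair, and each label lies in $J\setminus\{0,1\}\subseteq J$. Second, symmetry holds because the definition sets both $R_c(x,y)$ and $R_c(y,x)$ to $\gamma$. Third, the reflexivity condition $R_c(x,y)=1 \iff x=y$. The direction from left to right needs that edge labels are never $1$, and this holds because brackets range only over $J\setminus\{0,1\}$. Step \ref{boxR} respects this restriction, since it is applied only when $\gamma\neq 1$; the case $\square_1$ is handled by step \ref{boxRself}, which adds no node. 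Non-adjacent pairs, including pairs in different trees, receive $0\neq 1$. The direction from right to left is immediate, since $R_c(x,x)=1$ by definition. $V_c$ is trivially a map from propositional variables to subsets of $W_c$.

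For $P_c$, each root sequent gets $1/N_I\in I$ and every other node gets $0\in I$. There are exactly $N_I$ root sequents, and they are pairwise distinct elements of $W_c$, so $\sum_{x\in W_c}P_c(x)=N_I\cdot (1/N_I)=1$. This also gives the two extra $N_I$-QPM conditions: $|W_c|\geqq N_I$, and every $P_c(x)$ equals $1/N_I$ or $0$.

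The main obstacle is not computational but lies in the well-definedness claims. One must make sure that $R_c$ is single-valued and never equals $1$ off the diagonal, and that the root count is preserved, so that the construction steps (in particular \ref{maL} and \ref{maR}, which pass to a premise of (${\sf M}$L)/(${\sf M}$R)) do not alter the hypersequent shape. I would settle this by noting that every premise in $MSQ_1$ and $MSQ_2$ has the same $N_I$ components as the conclusion, and that every bracket introduced is labelled by some $\gamma\in J\setminus\{0,1\}$.
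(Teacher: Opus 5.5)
Your proposal is correct and follows the paper's argument. The paper gets single-valuedness and symmetry of $R_c$ from the tree structure and its definition, gets $R_c(x,y)=1$ only for $x=y$ from the bracket labels lying in $J\setminus\{0,1\}$, and gets the $P_c$ conditions directly from the definition; you additionally make explicit what it leaves implicit, namely that building $F(G_M)$ keeps exactly $N_I$ pairwise distinct root nodes, which is what gives $\sum_{x\in W_c}P_c(x)=1$ and $|W_c|\geqq N_I$.
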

\begin{proof}
By the definition of $R_c$, every pair of nodes is associated with a single number.
Furthermore, it is only in the case of $x = y$ that $R_c(x,y) =1$ because $\gamma \neq 1$ for $[\ ]_{\gamma}$. By the definition of $R_c$, symmetry follows immediately.
It follows immediately from the definition of $P_c$ that the conditions for $N_I$-QPM are satisfied.
\end{proof}

The embedding $\mathcal{E}_c$ from $G_M$ to $(W_c, R_c, V_c,P_c)$ is defined as follows.
From the configuration of $F(G_M)$, all the nodes (sequents) that existed in $G_M$ also exist in $F(G_M)$ (but with the added formulas).
$\mathcal{E}_c$ maps each node of $G_M$ to the corresponding node of $F(G_M)$.

\begin{lemma}\label{embEc}
$\mathcal{E}_c$ satisfies the embedding conditions.
\end{lemma}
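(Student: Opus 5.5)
The plan is to check the four embedding conditions for $\mathcal{E}_c$ one by one, using the fact that the construction of $F(G_M)$ only ever enlarges $G_M$.

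The key preliminary observation concerns what the steps of the procedure do. Every step either adds formulas to an existing node, or adds a new bracketed node $[\Rightarrow A]_\gamma$ as a child of an existing node (step \ref{boxR}). The steps (${\sf M}$L) and (${\sf M}$R), i.e.\ steps \ref{maL} and \ref{maR}, only add $\Diamond$-formulas to existing root sequents. Hence no step deletes a node, changes the label $\gamma$ of an existing edge, or changes the number of root sequents. I would verify this by a short induction over the finitely many steps producing $F(G_M)$. It yields a well-defined correspondence from nodes of $G_M$ to nodes of $F(G_M)$ that preserves tree structure and edge labels and sends roots to roots. This is the step needing the most care, chiefly because the paper identifies nodes by occurrence rather than by content. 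I would make explicit that nodes are tracked as positions in the tree and that formula addition leaves positions unchanged.

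With this in hand, the conditions follow in order.

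\textbf{Condition 1.} $\mathcal{E}_c$ is a function from sequents occurring in $G_M$ to $W_c$, since each node of $G_M$ corresponds to exactly one node of $F(G_M)$, and every node of $F(G_M)$ belongs to $W_c$.

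\textbf{Conditions 2 and 3.} The $i$-th root of $G_M$ goes to the $i$-th root of $F(G_M)$. By the definition of $P_c$, that root receives $P_c = 1/N_I \neq 0$. Distinct roots are distinct occurrences and hence distinct elements of $W_c$, which gives injectivity on roots. Since $G_M$ is $N_I$-fold, there are exactly $N_I$ roots, in agreement with Lemma~\ref{TSMBreal}.

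\textbf{Condition 4.} Suppose $\GA \Rightarrow \DE, [\GA' \Rightarrow \DE', \mathcal{T}']_\gamma, \mathcal{T}$ occurs in $G_M$. The same edge with the same label $\gamma$ then occurs between the image nodes in $F(G_M)$. By the definition of $R_c$, we get $R_c(\mathcal{E}_c(\GA \Rightarrow \DE), \mathcal{E}_c(\GA' \Rightarrow \DE')) = \gamma$. In particular $\gamma \leqq \gamma$, as required.

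One point to double-check is that the two nodes of an edge are distinct, so that $R_c$ takes the value $\gamma$ rather than the value $1$ assigned on the diagonal. This holds because parent and child are different positions in the tree, and $\gamma \neq 1$ for brackets.
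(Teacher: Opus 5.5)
Your proposal is correct and follows the paper's own proof. Both rest on the observation that building $F(G_M)$ only adds formulas and new child nodes, never deleting nodes, altering existing edge labels, or changing the roots, and both then check the four conditions in turn, with condition 4 reducing to $R_c(\mathcal{E}_c(\Gamma \Rightarrow \Delta), \mathcal{E}_c(\Gamma' \Rightarrow \Delta')) = \gamma$. Your extra care about tracking nodes by position rather than content, and about parent and child being distinct so that the diagonal clause of $R_c$ does not interfere, makes explicit what the paper leaves implicit.
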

\begin{proof}
We show that $\mathcal{E}_c$ satisfies all four conditions of embedding.
\begin{enumerate}
\item 
This follows directly from the definition.
\item 
From the definition of $P_c$, every root sequent $x$ in $W_c$ satisfies $P_c(x) \neq 0$, and they are the images of the root sequents in $G_M$ under $\mathcal{E}_c$.
\item 
From the construction of $F(G_M)$, the nodes that existed initially remain distinct nodes until the end. Therefore, each node is mapped to a different node.
\item 
From the definition of $R_c$, the values of the edges themselves become the relations.
From the construction of $F(G_M)$, the values of the edges that existed initially remain unchanged until the end. 
Hence, the structure $\GA \Rightarrow \DE, [ \GA' \Rightarrow \DE',\mathcal{T'}]_{\gamma}, \mathcal{T}$ in $G_M$ means $R_c(\mathcal{E}_c(\GA \Rightarrow \DE),\mathcal{E}_c(\GA' \Rightarrow \DE')) = \gamma$, which satisfies $\gamma \leqq \gamma' (= \gamma)$.
    
\end{enumerate}
\end{proof}

\begin{lemma}\label{fCM}
For all $\GA \Rightarrow \DE \in W_c, (\GA \Rightarrow \DE) \models \phi$ if $\phi \in \GA$, and $(\GA \Rightarrow \DE) \not\models \phi$ if $\phi \in \DE$.
\end{lemma}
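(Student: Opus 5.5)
Lemma~\ref{fCM} is the truth lemma for the canonical model. I will prove it by induction on the complexity of $\phi$, using the same complexity measure as in the construction of $F(G_M)$: ${\sf M_\alpha} A$ is more complex than every $\Diamond_\gamma A$. Both claims (for $\GA$ and for $\DE$) are proved simultaneously. Since $F(G_M)$ is a fixpoint of the steps, every step's addition is already present, so each case reduces to a closure property of $F(G_M)$.

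\textbf{Base and Boolean cases.} For $p\in\GA$, the definition of $V_c$ gives truth directly. For $p\in\DE$, note that $F(G_M)$ is unprovable, so $p\notin\GA$; otherwise the node would contain an axiom $p\Rightarrow p$. Hence $p$ is false. The cases $\neg\phi$ and $\phi\wedge\psi$ follow from steps \ref{wedgeL}--\ref{negR} and the induction hypothesis.

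\textbf{Case $\square_\gamma A\in\GA$.} Let $y$ satisfy $\gamma\leqq R_c(x,y)$. By the definition of $R_c$, exactly one of the following holds:
\begin{itemize}
\item $y=x$, and step \ref{boxLself} puts $A\in\GA$;
\item $y$ is a child of $x$ with edge weight $\gamma'\geqq\gamma$, and step \ref{boxL} puts $A$ in $y$'s antecedent;
\item $y$ is the parent of $x$ with edge weight $\gamma'\geqq\gamma$, and step \ref{boxLsym} puts $A$ in $y$'s antecedent.
\end{itemize}
Every other pair has $R_c=0<\gamma$, and edges between distinct trees are $0$. In each case the induction hypothesis gives $y\models A$.

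\textbf{Case $\square_\gamma A\in\DE$.} If $\gamma=1$, step \ref{boxRself} puts $A\in\DE$, so $x\not\models A$ and thus $x\not\models\square_1 A$. If $\gamma\neq1$, step \ref{boxR} creates a child $\Rightarrow A$ with weight $\gamma$, whose final form is a node $y$ with $A$ in its succedent and $R_c(x,y)=\gamma$. Hence $y\not\models A$.

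\textbf{Case ${\sf M_\alpha} A\in\GA_i$.} Here $x$ is a root. Step \ref{maL} adds $\Diamond_{(\gamma,j)}A$ to the antecedents of roots $j$ with $\sum_j\gamma_j/N_I=\alpha$. By the induction hypothesis each such root satisfies $\Diamond_{\gamma_j}A$, so $\zeta_j\geqq\gamma_j$, where $\zeta_j$ is defined as in the soundness proof. Since the roots are exactly the worlds with $P_c=1/N_I$, we get
\[
\sum_j P_c(j)\,\zeta_j\geqq\alpha,
\]
and therefore $x\models{\sf M_\alpha}A$.

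\textbf{Case ${\sf M_\alpha} A\in\DE_i$.} Step \ref{maR} chooses a premise of (${\sf M}$R), that is, values $S_j$ with $\sum_j S_j=N_I\alpha+(N_I-1)/N_J$, where $\Diamond_{S_j}A$ lies in the succedent of root $j$ whenever $S_j\leqq1$. By the induction hypothesis, $\zeta_j<S_j$ for those $j$, so $\zeta_j\leqq S_j-1/N_J$. For the remaining $j$, $S_j=1+1/N_J$ gives the same bound, since $\zeta_j\leqq 1$. Summing,
\[
\sum_j\zeta_j\leqq N_I\alpha+(N_I-1)/N_J-N_I/N_J=N_I\alpha-1/N_J<N_I\alpha,
\]
so ${\sf M_\alpha}A$ is false at $x$. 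This case also shows that ${\sf M}$ is evaluated globally, so its truth value does not depend on the root.

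\textbf{Main obstacle.} The delicate points are in the modal cases. First, $R_c$ must assign positive weight only to tree-adjacent nodes, which justifies using just the three $\square$L steps. Second, the arithmetic in the ${\sf M}$R case relies on the $1/N_J$ granularity to convert the strict inequality into the bound above. Third, the procedure's termination (it stops once no step changes the MN-sequent) is what guarantees these closure properties hold in $F(G_M)$. For instance, boxes introduced into a new node by step \ref{boxL} must themselves be processed; the complexity ordering ensures this terminates.
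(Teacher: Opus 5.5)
Your proof is correct and takes the same route as the paper. You use induction on complexity with ${\sf M}_\alpha A$ ranked above every $\Diamond_\gamma A$. The Boolean and $\square$ cases are read off from the closure of $F(G_M)$ under the construction steps; the paper delegates these to \cite{TKMB}, whereas you spell out the base case via unprovability and the fact that $R_c$ is positive only on the diagonal and on tree edges. The two ${\sf M}$ cases use exactly the paper's arithmetic from Steps \ref{maL} and \ref{maR}.

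In the ${\sf M}$R case you leave one step implicit, and the paper states it: falsity of $\Diamond_{S_j}A$ forces falsity of every $\Diamond_\delta A$ with $\delta\geqq S_j$, which is what gives $\zeta_j<S_j$. It follows immediately from the semantics of $\square_\gamma$.
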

\begin{proof}
The argument is by induction on the construction of $\phi$, where $\Diamond_\gamma A$ is defined as a subformula of ${\sf M}_\alpha A$.

\begin{enumerate}

\item The cases for classical connectives and $\square_\gamma A$ are natural extensions of the cases in \cite{TKMB}, using Step \ref{wedgeL}, \ref{wedgeR}, \ref{negL}, \ref{negR}, \ref{boxL}, \ref{boxLsym}, \ref{boxLself}, \ref{boxR}, and \ref{boxRself}.







\item Suppose that $ {\sf M}_\alpha A \in \GA$.
From Step \ref{maL}, depending on $\alpha$, multiple $\Diamond_{(\gamma,i)} A$ are added to the left-hand side of each root sequent.
From the inductive hypothesis, these $\Diamond_{(\gamma,i)} A$ are true at the points where each is embedded.
From $\gamma_i \leqq \zeta_i$, $\alpha = (1/N_I)\sum_{i=1}^{N_I} \gamma_i \leqq (1/N_I)\sum_{i=1}^{N_I} \zeta_i$.
Therefore, ${\sf M}_\alpha A$ is true.

\item Suppose that $ {\sf M}_\alpha A \in \DE$.
From Step \ref{maR}, depending on $\alpha$, multiple $\Diamond_{(\gamma,i)} A$ are added to the right-hand side of each root sequent.
From the inductive hypothesis, these $\Diamond_{(\gamma,i)} A$ are false at the points where each is embedded.
From the definition of $x \models \square_{\gamma} A$, for each point $\mathcal{E}_{c}(i)$, any $\Diamond_{\delta} A$ that satisfies $\gamma_i \leqq \delta$ is false. 
Therefore, $\zeta_i \leqq \gamma_i - 1/N_J$.
For points to which $\Diamond_{(\gamma,i)} A$ was not added, we have $\zeta_i \leqq 1 = S_i - 1/N_J$. Therefore, $(1/N_I)\sum_{i=1}^{N_I} \zeta_i \leqq (1/N_I)\sum_{i=1}^{N_I} (S_i - 1/N_J) = \alpha -1/(N_I N_J) < \alpha$, and hence ${\sf M}_\alpha A$ is false.

\end{enumerate}
\end{proof}

\begin{lemma}\label{falseC} 
Let $\GA \Rightarrow \DE \sqsubset G_M$ and $\GA' \Rightarrow \DE' \sqsubset F(G_M)$.
If $\mathcal{E}_c(\GA \Rightarrow \DE) = \GA' \Rightarrow \DE'$ and $\phi \in \GA (\phi \in \DE)$, then $\phi \in \GA' (\phi \in \DE')$. Thus, $G_M$ is false in $(W_c, R_c, V_c, P_c)$ under $\mathcal{E}_c$.
\end{lemma}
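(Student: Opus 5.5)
The first claim follows from the way $F(G_M)$ is built. I would argue by induction on the number of steps of the procedure that produces $F(G_M)$ from $G_M$.

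Write $G_M = G^0, G^1, \ldots, G^k = F(G_M)$ for the successive MN-sequents. Each $G^{j+1}$ arises from $G^j$ by one of Steps \ref{wedgeL}--\ref{maR}. Inspecting the steps, each one does only two things. It adds formulas to the antecedent or succedent of some existing node: Steps \ref{wedgeL}--\ref{boxLself}, \ref{boxRself}, \ref{maL}, \ref{maR}. Or it adds a new bracketed node $[\Rightarrow A]_\gamma$ beside the existing ones: Step \ref{boxR}. No step deletes a formula, removes a node, or changes the label of an existing edge.

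So for every node of $G^j$ there is a corresponding node of $G^{j+1}$ at the same position in the tree structure, and its antecedent and succedent are supersets of the old ones. Composing these correspondences along the chain gives exactly the map $\mathcal{E}_c$. It follows that if $\phi \in \GA$ (resp. $\phi \in \DE$) for $\GA \Rightarrow \DE \sqsubset G_M$, then $\phi \in \GA'$ (resp. $\phi \in \DE'$) for $\GA' \Rightarrow \DE' = \mathcal{E}_c(\GA \Rightarrow \DE)$. The only point that needs care is that in Steps \ref{maL} and \ref{maR} we pass to a premise of (${\sf M}$L) or (${\sf M}$R). I must check that such a premise keeps the principal formula ${\sf M}_\alpha A$ and every other formula of the conclusion, so that formulas are only added. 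This matches how the steps are phrased ("we add each $\Diamond_{(\gamma,i)} A$ to the root sequents", with ${\sf M}_\alpha A$ still present on its side). I regard this bookkeeping as the main point to state carefully, since the rule premises as displayed drop the principal formula.

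For the second claim, take any $\GA \Rightarrow \DE \sqsubset G_M$ and put $\GA' \Rightarrow \DE' = \mathcal{E}_c(\GA \Rightarrow \DE)$. By the first part, $\GA \subseteq \GA'$ and $\DE \subseteq \DE'$. By Lemma \ref{fCM}, every $\phi \in \GA'$ is true at $\GA' \Rightarrow \DE'$ and every $\phi \in \DE'$ is false there. Hence every formula of $\GA$ is true and every formula of $\DE$ is false at $\mathcal{E}_c(\GA \Rightarrow \DE)$, so $\GA \Rightarrow \DE$ is false at that point.

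This holds for every node of $G_M$. By Lemma \ref{TSMBreal}, $(W_c, R_c, V_c, P_c)$ is an $N_I$-QPM, and by Lemma \ref{embEc}, $\mathcal{E}_c$ is an embedding. Therefore, by the definition of falsity of an MN-sequent under an embedding, $G_M$ is false in $(W_c, R_c, V_c, P_c)$ under $\mathcal{E}_c$.
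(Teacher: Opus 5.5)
Your proposal is correct and follows essentially the same route as the paper: no step of the construction of $F(G_M)$ removes a formula, so each $\mathcal{E}_c$-image contains the original antecedent and succedent, and Lemma~\ref{fCM} then makes every node of $G_M$ false at its image. Your remark about Steps~\ref{maL} and~\ref{maR} makes explicit a point the paper's proof passes over: the displayed premises of (${\sf M}$L) and (${\sf M}$R) drop the principal formula, and it is settled by reading the context $\Gamma_i$ (resp.\ $\Delta_i$) of the rule instance as already containing ${\sf M}_\alpha A$, which is allowed because these are sets, so the chosen unprovable premise keeps it.
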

\begin{proof}
If $\GA \Rightarrow \DE$ is a node of the original $G_M$, then every $A \in \GA$ remains on the left of $\mathcal{E}_c(\GA \Rightarrow \DE)$, and every $B \in \DE$ remains on the right because all steps do not remove formulas in the composition of $F(G_M)$. Hence, by Lemma \ref{fCM},
$\mathcal{E}_c(\GA \Rightarrow \DE)\models A \quad(A\in\GA)$
and
$\mathcal{E}_c(\GA \Rightarrow \DE)\not\models B \quad(B\in\DE).$
Therefore, for any $\GA \Rightarrow \DE \sqsubset G_M$,  $\GA \Rightarrow \DE$ is false at $\mathcal{E}_c(\GA \Rightarrow \DE)$.
\end{proof}

\begin{theorem}[Completeness]\label{compMSQLM}
If $G_M$ is valid in $N_I$-QPM, then
$G_M$ is provable in {\bf MSQLM}.
\end{theorem}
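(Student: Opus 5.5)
We argue by contraposition: assuming $G_M$ is not provable in {\bf MSQLM}, we exhibit an $N_I$-QPM and an embedding under which $G_M$ is false, so $G_M$ is not valid in $N_I$-QPM. The canonical construction already prepared above supplies exactly this, and the proof reduces to assembling the preceding lemmas in order.

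First, starting from the unprovable $G_M$, we run the saturation procedure to obtain $F(G_M)$. Each step preserves unprovability: this is the invertibility argument given step by step, namely that if every premise of the relevant rule were provable, the current MN-sequent would be provable. Termination follows from the complexity measure, since each step introduces strictly simpler formulas, ${\sf M}_\alpha A$ ranks above the $\Diamond_\gamma A$ it generates, node-creating steps fire at most once per occurrence, and $I$, $J$ are finite. We then form the canonical model $(W_c,R_c,V_c,P_c)$. By Lemma~\ref{TSMBreal} it is an $N_I$-QPM. Here we should check that $F(G_M)$ has exactly $N_I$ root sequents, so that $P_c$ sums to $1$ and $|W_c|\geqq N_I$. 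This holds because $G_M$ is $N_I$-fold and no step adds or removes roots.

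Next, by Lemma~\ref{embEc}, $\mathcal{E}_c$ is an embedding of $G_M$ into this model. By Lemma~\ref{falseC}, which rests on the truth lemma (Lemma~\ref{fCM}) together with the fact that no step deletes formulas, every node $\GA\Rightarrow\DE\sqsubset G_M$ is false at $\mathcal{E}_c(\GA\Rightarrow\DE)$. Hence $G_M$ is false under $\mathcal{E}_c$, so $G_M$ is not valid, which contradicts the hypothesis.

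The main obstacle is the truth lemma, specifically its ${\sf M}$ cases. The classical and $\square$ cases follow the pattern of \cite{TKMB}. For ${\sf M}_\alpha A$, however, the argument depends on saturation having recorded, at every root, the $\Diamond_{(\gamma,i)}A$ chosen in Steps~\ref{maL} and~\ref{maR}. It also depends on the induction hypothesis applying to these $\Diamond$ formulas, which requires that ${\sf M}_\alpha A$ strictly dominate $\Diamond_\gamma A$ in the chosen complexity ordering. Given these, the arithmetic follows. In the left case, $\sum\gamma_i=N_I\alpha$ together with $\gamma_i\leqq\zeta_i$ yields truth. In the right case, $\zeta_i\leqq S_i-1/N_J$ gives $\sum\zeta_i\leqq N_I\alpha-1/N_J<N_I\alpha$. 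If these points hold up, the completeness theorem follows immediately by contraposition.
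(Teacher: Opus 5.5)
Your proposal is correct and takes essentially the same route as the paper. You argue by contraposition, saturate the unprovable $G_M$ to $F(G_M)$, and then use Lemmas~\ref{TSMBreal}, \ref{embEc}, \ref{fCM} and~\ref{falseC} to get an $N_I$-QPM and an embedding $\mathcal{E}_c$ under which $G_M$ is false; the paper cites Lemma~\ref{falseC} directly, and your extra checks (exactly $N_I$ roots, the ${\sf M}$-case arithmetic) agree with what those lemmas establish.
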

\begin{proof}
We prove the contrapositive. Assume that $G_M$ is unprovable.
From Lemma \ref{falseC}, there exist $N_I$-QPM and $\mathcal{E}$ such that $G_M$ is false.
Therefore, $G_M$ is not valid.
\end{proof}

The following theorem follows by the standard argument, since a finite canonical model can be constructed and the completeness proof is cut-free.

\begin{theorem}[Cut-elimination]\label{cutMSQLM}
If $G_M$ is provable in {\bf MSQLM}, then
$G_M$ is provable in {\bf MSQLM} without (cut).
\end{theorem}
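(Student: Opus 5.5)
The proof of cut-elimination will be semantic, going through soundness and the cut-free character of the completeness argument. Suppose $G_M$ is provable in {\bf MSQLM}, possibly using (cut). By Theorem \ref{soundMSQLM}, $G_M$ is valid in $N_I$-QPM. It then suffices to show that every valid MN-sequent is provable without (cut). I would argue this by contraposition, re-running the completeness argument of Theorem \ref{compMSQLM} for the calculus {\bf MSQLM} without (cut), which I will write {\bf MSQLM}$^-$.

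The first step is to assume that $G_M$ is unprovable in {\bf MSQLM}$^-$ and construct $F(G_M)$ exactly as before. I will go through Steps \ref{wedgeL}--\ref{maR} and check that each justification of ``the new MN-sequent is still unprovable'' uses only the corresponding logical rule: ($\wedge$L), ($\wedge$R), ($\neg$L), ($\neg$R), the four $\square$-rules, ({\sf M}L) and ({\sf M}R). None of these justifications invokes (cut), so unprovability in {\bf MSQLM}$^-$ is preserved at every step. Termination is argued by the same complexity measure, so $F(G_M)$ is a finite MN-sequent that is unprovable in {\bf MSQLM}$^-$.

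Next I build the canonical model $(W_c,R_c,V_c,P_c)$ and the embedding $\mathcal{E}_c$ from $F(G_M)$. Lemmas \ref{TSMBreal} and \ref{embEc} go through verbatim, since they only concern the shape of $F(G_M)$ and not provability. For Lemma \ref{fCM}, the one point where unprovability is used is at atoms: if $p$ occurred on both sides of some node of $F(G_M)$, that MN-sequent would be an axiom instance (with weakening absorbed into the axiom's context), hence provable in {\bf MSQLM}$^-$. So no atom is on both sides, and the valuation $V_c$ behaves correctly. The inductive cases, including the ${\sf M}_\alpha$ cases, use only the saturation properties of $F(G_M)$. Lemma \ref{falseC} then shows that $G_M$ is false under $\mathcal{E}_c$, so $G_M$ is not valid, which contradicts the validity obtained from soundness.

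I expect the main obstacle to be confirming that the original completeness proof really is cut-free. In particular, the saturation for ({\sf M}L) and ({\sf M}R) in Steps \ref{maL} and \ref{maR} must stay inside the cut-free system, and the ``choose an unprovable premise'' arguments must never rely on (cut) or on admissible rules whose admissibility would itself need (cut). I would also check the side condition $^{*}$ that restricts mixed-state formulas to root sequents, so that axioms and weakening at non-root nodes behave in the cut-free system as assumed. If these checks go through, the argument above establishes Theorem \ref{cutMSQLM}; because $F(G_M)$ is finite, the same construction also gives the finite-model property used later for decidability.
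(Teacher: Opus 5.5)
Your proposal is correct and follows the paper's route. Soundness (Theorem~\ref{soundMSQLM}) turns a proof that uses (cut) into validity, and the canonical-model completeness argument appeals only to the logical rules, the axiom and (wL)/(wR), never to (cut), so it yields a cut-free proof of every valid MN-sequent. Your explicit handling of the atomic case of Lemma~\ref{fCM} is a detail the paper leaves implicit; strictly, the paper's axiom has no side formulas, so that case needs the axiom followed by (wL)/(wR) rather than a direct axiom instance, which is harmless since weakening is in the cut-free system.
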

\begin{proof}
Let {\bf MSQLM}$^{-}$ denote {\bf MSQLM} without (cut). Since none of the steps used in the canonical-model construction relies on (cut), the completeness argument above applies equally to {\bf MSQLM}$^{-}$. Hence every valid MN-sequent is provable in {\bf MSQLM}$^{-}$. 
\end{proof}

The present cut-elimination result is semantic rather than a direct syntactic cut-reduction theorem: the canonical-model construction establishes completeness without using the cut rule. Hence, by soundness and cut-free completeness, every provable MN-sequent has a cut-free proof. Whether a direct syntactic cut-elimination procedure can be given for {\bf MSQLM} remains open.

\begin{theorem}[Decidability]
{\bf MSQLM} is decidable.
\end{theorem}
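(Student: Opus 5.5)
The idea is to decide provability by running the canonical-model construction of $F(G_M)$ as an exhaustive finite search, rather than by bounding model size directly. By soundness (Theorem \ref{soundMSQLM}) and completeness (Theorem \ref{compMSQLM}), an MN-sequent $G_M$ is provable iff it is valid in $N_I$-QPM. By cut-elimination (Theorem \ref{cutMSQLM}), it is also enough to search for cut-free proofs. The procedure is as follows. Starting from $G_M$, enumerate every possible run of the construction of $F(G_M)$, branching at each nondeterministic step: the two choices in Step \ref{wedgeR}, and the choice of premise from $MSQ_1$ or $MSQ_2$ in Steps \ref{maL} and \ref{maR}. For each terminal MN-sequent, build the canonical model $(W_c,R_c,V_c,P_c)$ and check whether $G_M$ is false under $\mathcal{E}_c$. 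If some run yields a countermodel, $G_M$ is not valid, hence unprovable. Otherwise we will argue that $G_M$ is provable.

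\textbf{Step 1: finite branching.} The sets $MSQ_1$ and $MSQ_2$ are finite. This holds because $J$ and $I_J$ are finite, the number of roots is the fixed $N_I$, and each $\gamma_i$ ranges over $J$ subject to a single linear equation. So each step has finitely many successors.

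\textbf{Step 2: termination.} Every step adds only formulas from a fixed finite closure set. This set consists of the subformulas of the formulas in $G_M$, together with the formulas $\Diamond_\gamma A$ and their subformulas for $\gamma\in J$ and ${\sf M}_\alpha A$ occurring in $G_M$. Every such added formula has lower complexity than the one that triggered the step. New nodes are created only by Step \ref{boxR}, once per occurrence of a boxed formula on the right. Since each new node starts from a formula of strictly smaller modal depth, the depth and width of the trees are bounded by a computable function of $G_M$. Hence every run terminates within a computable bound, and the whole search tree is finite.

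\textbf{Step 3: correctness of the search.} Checking truth of a formula at a point of a finite model is evidently computable, including the ${\sf M}_\alpha$ clause, which is a finite sum. For the converse direction, we need: if no run gives a countermodel, then $G_M$ is provable. The argument is that the construction, read backwards, is exactly cut-free proof search. Each step is the inverse of a rule, and the rule is invertible in the sense used in the construction: if the conclusion were unprovable, some premise would be unprovable. Suppose $G_M$ were unprovable. Then always following unprovable successors gives a run whose terminal sequent is unprovable. By Lemmas \ref{fCM} and \ref{falseC}, that run yields a countermodel, which the search would have found. Alternatively, one can phrase the procedure directly as backward cut-free proof search bounded by the same finite closure. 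A branch closes when its leaf is an axiom, and $G_M$ is provable iff the finite AND/OR tree evaluates to true.

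\textbf{Main obstacle.} The hard part is Step 2, proving a uniform bound. One must make sure that Steps \ref{maL} and \ref{maR} never reintroduce an ${\sf M}$-formula; this is guaranteed because they add only $\Diamond$-formulas at the roots. One must also check that the $\Diamond_\gamma A$ added by these steps, which unfold to $\neg\square_\gamma\neg A$, create only boundedly many nodes. My first attempt would be a measure: the multiset of complexities of formulas not yet processed at each node, with ${\sf M}_\alpha A$ ranked above every $\Diamond_\gamma A$ as the paper stipulates. I would then show that this multiset decreases strictly at every step, and that nodes are indexed injectively by pairs of a formula occurrence and its position, which yields an explicit exponential bound.
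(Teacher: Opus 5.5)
Your proposal is correct and is essentially the paper's argument. Both rest on the same finiteness facts (a finite closure set, at most one new node per right-boxed formula, finitely many premises for (${\sf M}$L) and (${\sf M}$R)) together with soundness and completeness; you simply evaluate the leaves of the same finite search tree semantically, via their canonical models and Lemmas \ref{fCM} and \ref{falseC}, instead of checking whether backward proof search closes.

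In your Step 2, the remark that each new node starts from a smaller formula does not by itself control formulas pushed into existing nodes later by Steps \ref{boxL} and \ref{boxLsym}. The clean way to bound the tree depth is the invariant that every formula at a node of depth $d$ has modal depth at most $K-d$, where $K$ is the maximal modal depth in $G_M$ (counting ${\sf M}_\alpha A$ like $\Diamond_\gamma A$) plus the nesting depth of $G_M$; every step, including the upward symmetric one, preserves this invariant.
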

\begin{proof}
Decidability follows from finite backward proof search in the cut-free calculus. For a given MN-sequent $G_M$, only formulas belonging to the finite closure of the subformulas of $G_M$ and the finitely many threshold modalities determined by $I$ and $J$ can occur. Moreover, each modal formula generates a new node at most once, and the rules (ML) and (MR) have only finitely many premises because $I, J$, and $I_J$ are fixed and finite. Hence backward proof search generates only a finite search tree. By soundness and completeness, $G_M$ is provable exactly when one of these finite searches closes.
\end{proof}




\section{Conclusion and Future Work}

We have proposed QLM, a finite modal framework for reasoning about basic probabilistic aspects of quantum mixed states. The logic separates transition probabilities at the level of pure states from probabilistic aggregation at the level of mixed states. We have also introduced the corresponding proof system {\bf MSQLM} and established its main proof-theoretic properties.

There are several natural directions for future work. One is to combine the present framework with dynamic quantum logic, incorporating notions such as unitary evolution and projection. Another is to extend it toward more concrete quantum phenomena, such as entanglement or decoherence.
Another proof-theoretic direction is to examine whether one can construct a proof system based on ordinary sequents, rather than on hypersequents or related structures.

At the same time, adding more quantum-mechanical structure to a Kripke-style model may obscure the purpose of the abstraction. The aim of QLM is not to replace Hilbert-space quantum logic, but to isolate a specific two-level probabilistic structure in a finite relational setting and make it accessible to modal and proof-theoretic analysis. Further development of such finite modal abstractions seems to be a promising direction.

\clearpage



%


\bibliographystyle{eptcs}
\bibliography{generic}

\clearpage

\end{document}